%% file: pricing_the_unprovable.tex
\documentclass[runningheads,a4paper]{llncs}

\usepackage[T1]{fontenc}
\usepackage{amsmath,amssymb}
\usepackage{array}
\usepackage{booktabs}
\usepackage{graphicx}
\usepackage{tabularx}
\usepackage{xspace}
\usepackage{placeins}
\usepackage{xurl}
\usepackage{cite}
\usepackage[hidelinks,hypertexnames=false]{hyperref}
\ifdefined\pdfgentounicode
  \input{glyphtounicode}
\fi
\newcommand{\vdisc}{\textsc{VDISC}\xspace}

\title{Pricing the Unprovable: Ambiguity, Turnstile Runs, and the
Verifiability Discount in Crypto Protocols}
\titlerunning{Pricing the Unprovable: The Verifiability Discount}
\author{%
Doohwi~Cha\inst{1}\orcidID{0009-0000-2513-6965}\thanks{Corresponding author.} \and
Hanhee~Kim\inst{2}\orcidID{0009-0008-1280-7043} \and
Hyeonjun~Yeo\inst{3}\orcidID{0009-0005-9368-306X} \and
Minsuk~Sung\inst{4}\orcidID{0009-0007-2282-6873} \and
Minjae~Lee\inst{2}\orcidID{0000-0002-8937-9605} \and
Seunghan~Son\inst{2}\orcidID{0009-0000-2902-9999} \and
Donghwa~Seo\inst{5}\orcidID{0009-0003-5718-1535} \and
Juyeong~Lee\inst{6}\orcidID{0000-0002-4619-925X}}
\authorrunning{D. Cha et al.}
\institute{%
Mirae Asset Securities, Seoul, South Korea\\
\email{doohwicha@gmail.com}
\and
Independent Researcher, Seoul, South Korea\\
\email{crypt0xb0@gmail.com}, \email{abraxasnz13@gmail.com},
\email{qzpm5n@g.skku.edu}
\and
Dept. of Physics and Astronomy, Seoul National University, Seoul, South Korea\\
\email{hj.yeo30@gmail.com}
\and
Department of Artificial Intelligence, Korea University, Seoul, South Korea\\
\email{minsuksung@korea.ac.kr}
\and
DS Investment \& Securities, Seoul, South Korea\\
\email{properitas95@kaist.ac.kr}
\and
EY Consulting, Seoul, South Korea\\
\email{juyeong.lee0839@gmail.com}}

\begin{document}
\raggedbottom
\maketitle

\begin{abstract}
Privacy-preserving protocols can repair a vulnerability without establishing
whether it was exploited beforehand, even when no theft is reported and
aggregate supply is protected. We call the potential value reduction from this
gap the \emph{verifiability discount}. A turnstile model shows how early
withdrawal can shift a possible shortfall to remaining holders and why credible
evidence can become more valuable as exposure grows.
We introduce \vdisc, a reproducible framework that records reported loss,
exposed value, public evidence, returns, and pool flows separately. It combines
Orchard's staged disclosures with reconstructed on-chain flows, earlier Zcash
cases, and a 40-event security panel.
Orchard follows the predicted direction: market responses were muted after
repair and supply assurance, while the largest negative return and aggregate shielded-pool net outflow
aligned with the acknowledgment that past abuse could not be ruled out. These
patterns are directional, not causal. Across events, market-cap-normalized
reported loss and unresolved exposure overlap among the 38 events with a
public loss figure (Pearson $0.86$, Spearman $0.56$), partly because exposure is
often coded from the loss itself, so returns alone cannot distinguish reported
loss from unresolved history.
The findings motivate three candidate design priorities: produce
privacy-compatible evidence about past states, limit the state exposed by a
single flaw, and disclose residual uncertainty explicitly.

\keywords{Blockchain economics \and Cryptoeconomics \and Ambiguity \and Event
study \and Shielded pools \and Verifiability \and Security disclosure.}
\end{abstract}

\section{Introduction}

When a security flaw is disclosed, investors ask how much was lost and whether
the flaw was fixed. Privacy-preserving protocols add a third question: can
outsiders verify what happened before the fix? A patch may stop new abuse and a
supply rule may protect total supply, yet private transaction history can leave
past abuse impossible to rule out. Finding no abuse is therefore weaker than
proving that abuse could not have occurred.

The June 2026 Zcash Orchard sequence separated these questions in time. The
repair announcement reported no evidence of exploitation or unauthorized value
creation and no detected impact on total supply \cite{zodl2026remediated};
hours later the Zcash Foundation confirmed both, citing the turnstile, which
limits transfers out of each shielded pool to its recorded balance
\cite{zcashfoundation2026nu62}. The disclosing team later judged prior
exploitation unlikely but stated that public evidence could not rule it out
\cite{wilcox2026orchard}. ZEC was already falling before that statement, so
timing alone is not causal; the sequence instead permits a staged comparison of
early assurance---notice, repair, and supply protection---with later unresolved
history.

We call the value reduction associated with adverse histories that remain
possible after repair the \emph{verifiability discount}. Investors need not
agree on a precise hidden-loss probability: leaving several adverse histories
open can itself affect valuation and exit. This is ambiguity in the sense of
decision-making under uncertain probabilities
\cite{gilboa1989maxmin,klibanoff2005smooth,epstein2008ambiguity}.

The paper makes five contributions. First, it defines a pricing construct that
separates reported loss from adverse histories that public evidence cannot
close. Second, it develops a formal turnstile model. Early withdrawal can shift
more of a possible shortfall to remaining holders; locally, exit pressure rises
with unresolved exposure (Proposition~\ref{prop:p4}), and proof can be more
valuable when more value was exposed (Proposition~\ref{prop:p5}). Third, it
builds a four-anchor Orchard evidence design that pairs returns with net
unshielding---ZEC leaving the aggregate shielded pools minus ZEC entering them.
Fourth, earlier Zcash cases, 14 coded disclosures, and a 40-event panel make the
identification boundary explicit: Orchard and the earlier Zcash cases show
timing and direction rather than a causal effect, the cross-event pilot does
not identify even the sign of the CAR--$z$ slope, and reported loss and
unresolved history require separate measures. Fifth, \vdisc records source and time, reported loss, exposure, public
proof, returns, and pool flows separately and maps them to three design
priorities: strengthen past-state evidence, limit single-flaw exposure, and
state what remains unresolved. Figure~\ref{fig:framework} summarizes the
mechanism.

\begin{figure}[t]
\centering
\includegraphics[width=\textwidth]{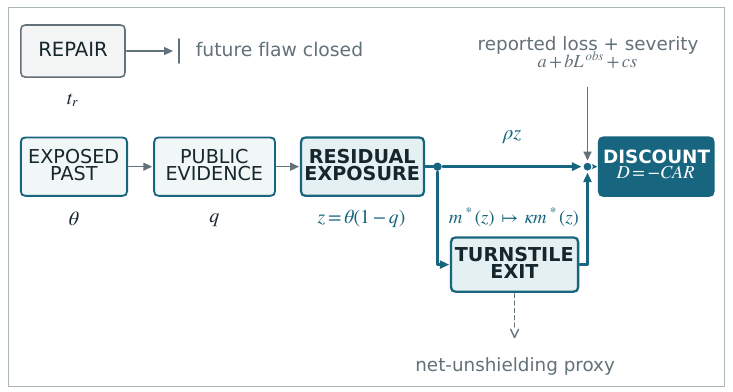}
\caption{Evidence pipeline for the verifiability discount. Repair at $t_r$
closes future abuse but does not resolve the past. Public evidence $q$ maps
exposure $\theta$ to residual $z=\theta(1-q)$. On a locally stable branch,
$z$ enters $D=-CAR$ directly through $\rho z$ and indirectly through turnstile
exit $m^*(z)$ and price pressure $\kappa m^*(z)$; reported loss $L^{obs}$ and
severity $s$ enter separately. Arrows encode model channels, not causal
estimates; net unshielding indicates exit direction but does not measure gross
exit.}
\label{fig:framework}
\end{figure}

\section{Related Literature}

Ambiguity arises when investors do not know which probability or model applies
\cite{gilboa1989maxmin,klibanoff2005smooth,epstein2008ambiguity}; markets price
it separately from volatility \cite{brenner2018asset}, including in crypto
returns \cite{luo2021fintech}. Security-disclosure and crypto event studies
show that news about loss, exposure, and controls moves prices, including after
51\% and broader attacks
\cite{campbell2003economic,cavusoglu2004effect,telang2007impact,spanos2016impact,shanaev2020cryptocurrency,ramos2021disturbance,cheraghali2024impact}. Orchard
adds a distinct object: a patched flaw and protected supply whose private past
still cannot be verified. The verifiability discount therefore records
reported loss separately from adverse histories that remain open.

Run models show why sequential withdrawals can induce exit before uncertainty
is resolved \cite{diamond1983bank,goldstein2005demand,morris1998unique}; related
incentives appear in stablecoins \cite{ahmed2024public,davernas2022stablecoins},
and concentrated selling depresses prices in shallow markets
\cite{brunnermeier2009liquidity}. Here early exit can shift more of a hidden
shortfall to remaining holders, linking incomplete proof to flow and price.

Zerocash introduced anonymous zero-knowledge payments
\cite{sasson2014zerocash}, whose soundness depends on correct circuit
computation \cite{gennaro2013quadratic}. In Zcash, transfers inside a shielded
pool are hidden while entries and exits remain visible
\cite{kappos2018empirical}. A later circuit repair may thus
leave earlier private use unresolved. Our model lets proof close possible
histories and the turnstile allocate any remaining shortfall.

\section{From Unverifiable History to Valuation and Exit}\label{sec:model}

The model asks what changes when future abuse is stopped but pre-repair abuse
cannot be ruled out. If some claims may be invalid, early exit avoids part of
that risk and spreads the same possible shortfall over fewer remaining claims.
Public proof narrows those possible histories. The model links the unresolved
remainder to a price discount and exit pressure.

For event $i$, exposure $\theta_i$ is the share of the relevant claim base that
was vulnerable to the disclosed flaw.
Public proof strength is $q_i\in[0,1]$. At $q_i=1$, public evidence rules out the
relevant adverse histories; at $q_i=0$, it rules out none of them. Residual
unverifiable exposure is therefore $z_i=\theta_i(1-q_i)$. Reported loss
$L_i^{obs}$ is the amount quantified by public sources, whereas actual loss in
private state, $L_i^*$, may be different or impossible to observe. The severity
signal $s_i$ records other observable event characteristics and deliberately
excludes residual unverifiability. Because $L_i^*$ is unobserved, it does not
enter the empirical equation directly:
\begin{equation}\label{eq:reducedform}
D_i = -CAR_i = a + bL_i^{obs} + cs_i + \gamma\,\theta_i(1-q_i) + \eta_i .
\end{equation}
Here a larger $D_i=-CAR_i$ means a larger negative abnormal return. The equation
records reported loss, observable severity, and residual unverifiable exposure
separately; $\gamma$ is the association with the last component after the first
two are included. A single return still cannot identify all three channels, so
the empirical analysis also examines on-chain exit behavior.

\subsection{A Two-Date Turnstile Run}

\paragraph{Choices and information.} At date~0, a vulnerability is disclosed
and patched; at date~1, a migration or supply proof either resolves the pool
history or leaves it open. Normalize holders, claims, and the value of each
valid claim to one. Invalid claims $x$ are unobserved, with
$x\in[0,\theta(1-q)]$. Exposure $\theta$ widens this range, proof $q$ narrows
it, and $z\equiv\theta(1-q)$ is the maximum unresolved claim value. Under
maxmin preferences \cite{gilboa1989maxmin}, investors evaluate the worst
publicly admissible case, $x=z$, rather than assigning probabilities to each
$x$.

Each holder exits or stays. With exit mass $m$, a slack turnstile processes
unshielding at par, while selling realizes $1-\kappa m$, where
$\kappa\in(0,1)$ measures price pressure. Exit also costs holder $j$ an idiosyncratic
$\varphi_j$ (tax, lost privacy, inattention, or other friction), distributed by
$F$ on $[0,\bar\varphi]$. We assume $F$ is continuously differentiable and
strictly increasing, with $f=F'>0$ near an interior cutoff, $F(0)=0$, and
$\bar\varphi>1-\kappa$.

\paragraph{Why exit can reinforce itself.} The turnstile limits unshielding to
the recorded pool balance and serves requests in block order, a sequential
service constraint \cite{diamond1983bank}. Invalid claims add demand without
capacity, so total demand can be $m+x$ against unit capacity. Even while the
turnstile is slack ($m+x\le1$), after masses $m$ and
$x$ leave, the remaining $1-m$ claims have only $1-m-x$ backing. At the worst
case $x=z$, a stayer receives $\max\{0,\,1-z/(1-m)\}$. More exit therefore
concentrates the possible shortfall on fewer stayers and can induce further
exit. This matches Zcash's emphasis on pool totals and remaining supply
\cite{electriccoin2019zcash}: unresolved history creates the incentive, and
the turnstile allocates the possible shortfall.

\paragraph{Equilibrium.} With a slack turnstile ($m\le1-z$), exit avoids
worst-case dilution but incurs individual cost and market impact. Holder $j$
exits when $\varphi_j<g(m;z)$, where
\begin{equation}\label{eq:cutoff}
g(m;z)\;=\;\frac{z}{1-m}\;-\;\kappa m .
\end{equation}
The first term is the dilution avoided by leaving; the second is the price cost
of joining a larger exit. The best-response mass is
$B(m;z)=F(\max\{0,g(m;z)\})$, and equilibrium requires $B(m;z)=m$. Because $m$
raises both avoided dilution and market impact, $B$ need not be monotone.

We focus on runs before capacity is exhausted, isolating shortfall
concentration from a specific rationing rule. For some $\hat z\in(0,1)$, impose
$z\le\hat z$ and $F(1-\kappa(1-\hat z))\le1-\hat z$; such a $\hat z$ exists
because $F(1-\kappa)<1$. Then $m>1-\hat z$ cannot be an equilibrium, since even
the most favorable rationing rule gives
$B(m;z)\le F(1-\kappa(1-\hat z))\le1-\hat z<m$. Continuity supplies a fixed
point in the slack region.

Because concentration and market impact move incentives in opposite
directions, we claim neither global monotonicity nor uniqueness. Comparative
statics apply only to differentiable, locally stable branches with
$|B_m(m^*;z)|<1$, where small deviations are not amplified.

On a locally stable branch, unresolved history affects the date-0 discount in
two ways. Investors apply a direct haircut to the possible shortfall, and the
resulting exits add market-price pressure:
\begin{equation}\label{eq:discount}
D \;=\; a + bL^{obs} + cs \;+\; \rho z \;+\; \kappa\,m^*(z).
\end{equation}
Here $\rho z$ is the direct valuation effect and $\kappa m^*(z)$ is exit-driven
price pressure. Thus the reduced-form $\gamma$ in
Eq.~\eqref{eq:reducedform} combines both responses locally. Returns contain
both; on-chain flow provides a separate directional check on exit.

The next two propositions formalize the model's central claims. The first
establishes when unresolved history produces exit pressure. The second shows
why proof is more valuable when a larger claim base was exposed.

\begin{proposition}[run intensity]\label{prop:p4}
Under the slack-turnstile condition, for $z=\theta(1-q)\in[0,\hat z]$: (i) if
$z=0$ (i.e.\ $q=1$ or $\theta=0$), the unique equilibrium is $m^*=0$; (ii) if
$z>0$, every equilibrium has $m^*>0$ and leaves the turnstile slack
($m^*+z\le1$) under any rationing rule; (iii) along any differentiable locally
stable threshold branch with an interior cutoff, $\mathrm dm^*/\mathrm dz>0$.
The empirical design uses only the corner at $q=1$ and this local directional
implication; it does not rely on global uniqueness or global monotonicity.
\end{proposition}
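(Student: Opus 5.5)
The plan is to work directly with the best-response map $B(m;z)=F(\max\{0,g(m;z)\})$, where $g(m;z)=z/(1-m)-\kappa m$, and treat the three parts in order: a direct computation at $z=0$, a sign argument for $z>0$ combined with the capacity bound already derived in the text, and the implicit function theorem for (iii).

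\emph{Part (i).} Here $g(m;0)=-\kappa m\le 0$ for all $m\ge0$. Hence $\max\{0,g\}=0$, and since $F(0)=0$ we get $B(m;0)=0$ for every $m$. The fixed-point condition $B(m;0)=m$ therefore forces $m^*=0$. Conversely, $m=0$ is a fixed point, so it is the unique equilibrium.

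\emph{Part (ii).} Take $z>0$.
\begin{itemize}
\item Suppose, for contradiction, that $m^*=0$. Then $g(0;z)=z>0$, so $B(0;z)=F(z)$.
\item Positivity of $F(z)$ is the step I expect to need the most care. The assumptions only give $f>0$ near an interior cutoff, but $F$ is also assumed strictly increasing on $[0,\bar\varphi]$ with $F(0)=0$. Since $z\le\hat z<1-\kappa<\bar\varphi$, the point $z$ lies in the support, so $F(z)>F(0)=0$.
\item This contradicts $B(0;z)=0$, so every equilibrium has $m^*>0$.
\item For slackness, I reuse the argument given before the proposition. Any candidate with $m>1-\hat z$ satisfies $B(m;z)\le F(1-\kappa(1-\hat z))\le 1-\hat z<m$ under any rationing rule. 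This holds because a stayer's payoff is at most par, so the exit incentive is bounded by $1-\kappa m<1-\kappa(1-\hat z)$.
\item Therefore $m^*\le 1-\hat z\le 1-z$, which is exactly $m^*+z\le1$.
\end{itemize}

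\emph{Part (iii).} On a differentiable branch with an interior cutoff we have $g(m^*;z)>0$ and $f(g)>0$. Define $H(m,z)=F(g(m;z))-m$.
\begin{itemize}
\item The partial derivatives are $H_m=B_m-1$ and $H_z=f(g)\,\partial g/\partial z=f(g)/(1-m^*)$.
\item Local stability, $|B_m|<1$, gives $H_m<0$, so $H_m\ne0$ and the implicit function theorem applies.
\item It yields
\[
\frac{\mathrm dm^*}{\mathrm dz}=\frac{f(g)/(1-m^*)}{1-B_m(m^*;z)}>0,
\]
since the numerator is positive ($m^*<1$ by part (ii)) and the denominator is positive by stability.
\end{itemize}
No global monotonicity of $B$ is needed, because the sign of $B_m$ itself is never used, only $1-B_m>0$.
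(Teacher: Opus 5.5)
Your proposal is correct and follows the paper's proof sketch essentially line for line: $B(\cdot;0)\equiv0$ at $z=0$; $B(0;z)=F(z)>0$ together with the pre-proposition capacity bound for (ii); and $\mathrm dm^*/\mathrm dz=B_z/(1-B_m)$ with $B_z=f(g)/(1-m^*)$ for (iii). Two justifications need tidying, though neither changes a conclusion. First, $\hat z<1-\kappa$ does not follow from the assumptions (e.g.\ uniform $F$ with $\bar\varphi=10$ and $\kappa=0.5$ admits $\hat z=0.8$), but you do not need it, since $F$ strictly increasing with $F(0)=0$ gives $F(z)>0$ for every $z>0$. Second, the bound $g\le1-\kappa m$ that excludes capacity-binding equilibria comes from the stayer's payoff being nonnegative, i.e.\ the avoided loss is at most par, not from that payoff being at most par.
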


\begin{proposition}[proof--exposure complementarity]\label{prop:p5}
The direct ambiguity component of \eqref{eq:discount} has
$\partial^2[\rho\theta(1-q)]/\partial\theta\,\partial q=-\rho<0$: the price value
of proof strength scales with exposed value. The run component reinforces this
complementarity on differentiable stable branches whenever
$\mathrm d[z\,m^{*\prime}(z)]/\mathrm dz\ge0$; this curvature condition holds in
the uniform-cost benchmark but is not imposed globally.
\end{proposition}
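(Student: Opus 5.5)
The direct component is a one-line computation. Since $\rho\theta(1-q)$ is bilinear in $(\theta,q)$, we have $\partial_q[\rho\theta(1-q)]=-\rho\theta$ and hence $\partial_\theta\partial_q=-\rho<0$. The interpretation follows: the marginal discount reduction from proof, $-\partial_q D=\rho\theta$, grows with exposure. For the run component I will differentiate $\kappa m^*(z(\theta,q))$ on a differentiable locally stable branch, taking as given the existence and sign of $m^{*\prime}$ from Proposition~\ref{prop:p4}(iii). By the chain rule, with $z_q=-\theta$, $z_\theta=1-q$ and $z_{\theta q}=-1$,
\begin{equation*}
\frac{\partial^2 m^*}{\partial\theta\,\partial q}
= m^{*\prime\prime}(z)\,z_\theta z_q + m^{*\prime}(z)\,z_{\theta q}
= -\bigl[\theta(1-q)\,m^{*\prime\prime}(z)+m^{*\prime}(z)\bigr]
= -\frac{\mathrm d}{\mathrm dz}\bigl[z\,m^{*\prime}(z)\bigr].
\end{equation*}
The run component's cross-partial is therefore $-\kappa\,\mathrm d[z m^{*\prime}(z)]/\mathrm dz$. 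Under the stated curvature condition this is $\le0$, so it has the same sign as $-\rho$ and reinforces the complementarity. The requirement that $m^*$ be twice differentiable along the branch would be justified by the implicit function theorem applied to $m=F(g(m;z))$. This needs $F$ to be $C^2$ near the cutoff, or else I would restrict the claim to points where $m^{*\prime}$ is differentiable. I would flag this regularity explicitly, since the paper only assumes $F\in C^1$.

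It remains to verify the curvature condition in the uniform-cost benchmark, $F(\varphi)=\varphi/\bar\varphi$. The interior equilibrium solves $\bar\varphi m = z/(1-m)-\kappa m$, that is, $(\bar\varphi+\kappa)m(1-m)=z$. This is an explicit quadratic, and its stable branch is the smaller root
\[
m^*(z)=\tfrac12\Bigl(1-\sqrt{1-4z/(\bar\varphi+\kappa)}\Bigr).
\]
Write $c=\bar\varphi+\kappa$ and $u=\sqrt{1-4z/c}$. Then $m^{*\prime}=1/(cu)$, and $z m^{*\prime}=z/(cu)$ is a product of the increasing function $z$ and the increasing function $1/u$. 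Both factors are positive, so the product is increasing, which gives $\mathrm d[z m^{*\prime}]/\mathrm dz>0$. I will also check that this branch is locally stable, $|B_m|<1$. Since $B_m=(1/(1-m)^2\cdot z-\kappa)/\bar\varphi$ evaluated at $z=cm(1-m)$, we get $B_m=(cm/(1-m)-\kappa)/\bar\varphi$. At the small root $m<1/2$ this gives $B_m<1$ exactly when $c\,m/(1-m)<c$, which holds. I will also check $B_m>-1$, which is immediate because $B_m\ge-\kappa/\bar\varphi>-1$ given $\bar\varphi>1-\kappa>\kappa$ or a similar bound. If that inequality does not hold, I will confine the claim to the parameter region where it does.

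The main obstacle is the regularity and branch selection in the general case, not the sign algebra. I would state the run part conditionally on the curvature hypothesis, as the proposition does, and rely on the benchmark only as an existence example.
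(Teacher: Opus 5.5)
Your proposal is correct and takes essentially the paper's route: the bilinear cross-partial $-\rho$ for the direct term, and the chain-rule reduction of the run term's cross-partial to $-\kappa\,\mathrm d[z\,m^{*\prime}(z)]/\mathrm dz$. The paper's sketch leaves that reduction implicit; you make it explicit, along with the fair caveat that it needs $m^{*\prime}$ differentiable, e.g.\ $F$ of class $C^2$ near the cutoff. The benchmark check also matches: the explicit lower root $m_-^*(z)=\tfrac12\bigl(1-\sqrt{1-4z/A}\bigr)$, $A=\bar\varphi+\kappa$, has a positive, increasing derivative, which makes $z\,m_-^{*\prime}(z)$ increasing.

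One small fix in your stability check: the bound $B_m\ge-\kappa/\bar\varphi>-1$ requires $\kappa<\bar\varphi$. That does not follow from $\bar\varphi>1-\kappa$, because your chain $1-\kappa>\kappa$ needs $\kappa<1/2$. So impose $\kappa<\bar\varphi$ as a hypothesis of the benchmark, exactly as the paper does, together with $z<A/4$ and slackness $m_-^*(z)+z<1$.
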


\paragraph{Proof Sketch.} If $z=0$, then $g(m;0)\le0$, so
$B(\cdot;0)=0$ and $m^*=0$ is unique. If $z>0$, then
$B(0;z)=F(z)>0$, so zero cannot be an equilibrium; the slack condition excludes
capacity-binding equilibria. On a differentiable stable branch,
\[
\frac{\mathrm dm^*}{\mathrm dz}
=\frac{B_z(m^*;z)}{1-B_m(m^*;z)}>0.
\]
because $B_z=f(g)/(1-m)>0$ at an interior cutoff and local stability gives
$1-B_m>0$. Hence $m^*$ rises locally with $z$. The negative cross-partial in
Proposition~\ref{prop:p5} says that proof reduces the direct discount more when
exposure is larger; the run component shares this complementarity only under
the stated curvature condition.

\paragraph{Uniform-Cost Benchmark.} For illustration, let
$F(\varphi)=\varphi/\bar\varphi$ and $A=\bar\varphi+\kappa$, combining exit-cost
dispersion and market impact. If $0<z<A/4$, $\kappa<\bar\varphi$, and
$m_-^*(z)+z<1$, the locally stable branch is
\[
m_-^*(z)=\frac{1-\sqrt{1-4z/A}}{2},\qquad
\frac{\partial m_-^*}{\partial z}
=\frac{1}{A\sqrt{1-4z/A}}>0.
\]
The lower root is the stable slack solution. Its positive, increasing derivative
shows a locally steeper exit response as unresolved exposure grows. Magnitudes
depend on $A$, so this is intuition, not calibration.

\paragraph{Model Implications and Scope.} Complete proof or zero exposure sets
$z=0$ and removes ambiguity-driven exit; on a locally stable interior branch,
more $z$ raises exit. Stronger proof also reduces the direct discount more when
exposure is larger, with run reinforcement only under
Proposition~\ref{prop:p5}'s curvature condition. These are local signs under
maxmin evaluation, a slack turnstile, linear market impact, and local stability,
not claims of a run after every disclosure or of a unique global equilibrium.
The case study tests these directions rather than structural parameters. Its
net pool flow (outflow minus inflow) is only a directional proxy for the model's
gross exit mass $m$; prices, pool flow, and order flow are complementary rather
than interchangeable evidence.

Two distinctions matter when taking the model to data. First, $z=0$ is a sharp
boundary, whereas the positive-$z$ result is local and does not imply that every
incident produces a run. Second, observed net flow subtracts inflows from gross
outflows, while $m$ counts gross exit. Agreement in direction is informative,
but their magnitudes are not interchangeable. Transaction-level records,
protocol timestamps, or cryptographic commitments would map withdrawals to the
model more tightly and support nonlinear or adaptive extensions.

\subsection{Predictions and Their Empirical Tests}

The price equation gives \textbf{P1}: with weak proof, larger exposure
$\theta_i$ is associated with a larger discount; and \textbf{P2}: holding
exposure fixed, credible supply, containment, or migration evidence raises
$q_i$ and reduces the discount. These are directional, not causal, associations.

The turnstile gives \textbf{P4}: ambiguity-driven exit is zero at
$z=\theta(1-q)=0$ and rises locally with positive $z$ on a stable interior
branch. Net unshielding after unresolved-history news provides a directional
check, not an estimate of gross exit mass $m$. \textbf{P5} says the direct value
of proof rises with exposure; run reinforcement requires the stated curvature.
\textbf{P3}, an untested extension in which AI-assisted discovery changes
perceived vulnerability arrival, lies outside the two-date model.

Accordingly, four ZEC anchors examine P2, on-chain flow checks P4, and the
cross-event panel examines P1. P5 informs design but is not estimated. Returns
and flows test predicted directions, not a causal ambiguity effect.

\section{Measuring Loss, Unverifiability, and Holder Response with \vdisc}\label{sec:data}

Testing requires separate measures of damage, unresolved history, and holder
response. The coverage layer links 151 of 544 DeFiLlama seed incidents to token
prices, defining availability rather than an event-study sample
\cite{defillamahacks}. The analysis layer contains four official Bitcoin or
Zcash cases and ten disclosures coded by fixed rules; a separate 40-event panel examines
loss--unverifiability overlap. Included events require a timestamped disclosure
or remediation, a traded token, price history, and sufficient public records.

\vdisc records exposure $\theta$ and proof $q$ separately: a severe event need
not be hard to verify. Exposure is the potentially affected claim base, proxied
within failure class by shielded-pool balance, bridge or affected total value
locked (TVL), or excess minted value, each scaled by supply or market value;
ratios above one are capped and flagged. When a record quantifies only the
stolen amount, that amount serves as the affected value. Proof asks how much public evidence
closes adverse histories. Bounds on abuse, stolen funds, affected addresses, or
contaminated state strengthen $q$; material unresolved abuse weakens it.

Empirical $q$ averages five yes/no checks within two days: supply proof,
containment, replayability, migration proof, and an official residual-uncertainty
statement. Each requires a public URL and quote; later evidence is stored
separately. Because this window overlaps part of the daily cumulative abnormal
return (CAR), $q$ organizes evidence rather than acting as a predetermined
treatment. It also proxies auditability, not structural $q$: stating an unknown
makes it visible without necessarily reducing it.

Automated models retrieve candidate sources and propose labels, but those labels
do not construct $\theta$ or $q$ and do not enter the empirical tests. In a
frozen 50-event audit of GPT-5.4 Mini and Qwen3.7 Plus, 99 of 100 outputs
cited a source with an evidence summary; inter-model Cohen's kappa was $0.65$ for type and $0.58$ for severity but only
$0.002$--$0.052$ for judgment-dependent fields. Thus fixed formulas and
checklists construct $\theta$ and $q$; model-assisted judgment remains future
work.

For reproducibility and safety, collection uses public timestamped sources and
excludes exploit instructions, payloads, keys, discovery prompts, live-system
procedures, and other unsafe fields. Fixed fields and observation times preserve
Orchard's disclosure sequence instead of collapsing it to one date.

\section{Orchard: Four Anchors and One Unresolved Past}\label{sec:zec}

Repair and supply assurance arrived together; proof of past non-abuse did not
arrive in the sample window: the later statement said Orchard data could not
provide it cryptographically and proposed an upgrade to prove supply integrity.
We use four timestamped public statements as \emph{anchors}: the first notice,
the NU6.2 repair with its remediation statement, the Foundation's confirmation,
and the later statement that prior exploitation could not be ruled out.

After May 29 discovery, a soft fork disabled Orchard at block 3,363,426,
timestamped 03:30 UTC on June 2 \cite{blockchairapi} (official posts give
02:00--02:30; we use the block time), before the 06:20 notice
\cite{zcashfoundation2026coordinated}; it is an operational reference, not an
anchor. NU6.2 restored operation on June 3 at 04:05, and the 04:09 remediation
statement reported no evidence of exploitation or unauthorized value creation
and no detected impact on total supply \cite{zodl2026remediated}. At 10:55 the
Foundation confirmed both points \cite{zcashfoundation2026nu62}. On June 4 at
21:33:53, the context post judged prior exploitation unlikely but said Orchard
data could not rule it out cryptographically, and proposed a supply proof
\cite{wilcox2026orchard}. Hayes later announced an exit, but its execution time
is unknown \cite{hayes2026x}.

\subsection{Market Ordering and Event-Study Design}

The market test asks whether ZEC responds to each anchor's information. A
standard event study \cite{mackinlay1997event} subtracts its usual BTC-linked
return. For token $i$ and period $t$,
$R_{i,t}=\alpha_i+\beta_i R_{m,t}+\varepsilon_{i,t}$,
$AR_{i,t}=R_{i,t}-\hat\alpha_i-\hat\beta_i R_{m,t}$, and
$CAR_i(\tau_1,\tau_2)=\sum_{t=\tau_1}^{\tau_2}AR_{i,t}$. Thus CAR accumulates
ZEC return unexplained by hourly Binance BTC/USDT \cite{binancespotklines}.
The model uses 335 pre-discovery hours ending May 28 at 23:00 UTC; matched
pre/post windows contain 24 complete returns and omit the anchor bar.

\input{tables/table_zec_case}

Table~\ref{tab:zecdiag} shows the ordering: CAR is $0.185$ after notice,
$0.014$ and $0.016$ after the repair statement and its confirmation, and
$-0.162$ only after past exploitation cannot be ruled out. This matches the
predicted direction but is not causal identification.

The context tail probability is $p_{ob}=0.003$ and
$q_{\mathrm{BH}}=0.051$ across the 16-test family
\cite{benjamini1995controlling}; its pre-anchor CAR is already $-0.134$, which
reaches the same placebo floor. Corrado ranks
\cite{corrado1989nonparametric}, a GARCH volatility diagnostic motivated by
event-induced-variance concerns \cite{boehmer1991event}, log returns, and 60/90-day windows
($p\leq0.001$) preserve the negative sign; no leakage appears from May 30 00:00
to June 2 07:00. These checks do not
create a causal break: the final post bundled severity with unverifiability and
repricing had begun earlier.

The context window is therefore unusual, but it clears a conventional 5\%
threshold only before adjustment for multiple testing. Its negative pre-CAR is
also economically important: robustness to clocks and estimators cannot turn a
price decline already under way into a disclosure-time discontinuity.

\subsection{Minute Anchor Check and Cross-Venue Robustness}

The same statement first appeared as an X article at 21:17 UTC, 16 minutes
before the forum post \cite{zooko2026x}. Reanchoring to that time changes minute
CAR from $-0.166$ to $-0.205$ but not its sign. Same-window CARs on Coinbase, Kraken, OKX, and Bybit range from
$-0.155$ to $-0.166$ \cite{crossvenuecandles}; Bybit is a perpetual, so these
are not four independent spot replications.

Figure~\ref{fig:market} shows why this robustness is not a sharp break. CAR had
already fallen by $0.208$ since midnight ($0.188$ by 21:17). Turnover was
comparable to the June 3 crisis day until about the time of the X release and
about four times higher in the 16 minutes before the forum post; the
post-anchor 60-minute decline was about 1.6 times the pre-anchor decline. Hayes confirms an exit announcement, not execution time
\cite{hayes2026x}. The sign survives timestamp choices, but the earlier movement
precludes a discrete causal interpretation.

\begin{figure}[ht]
\centering
\includegraphics[width=\textwidth]{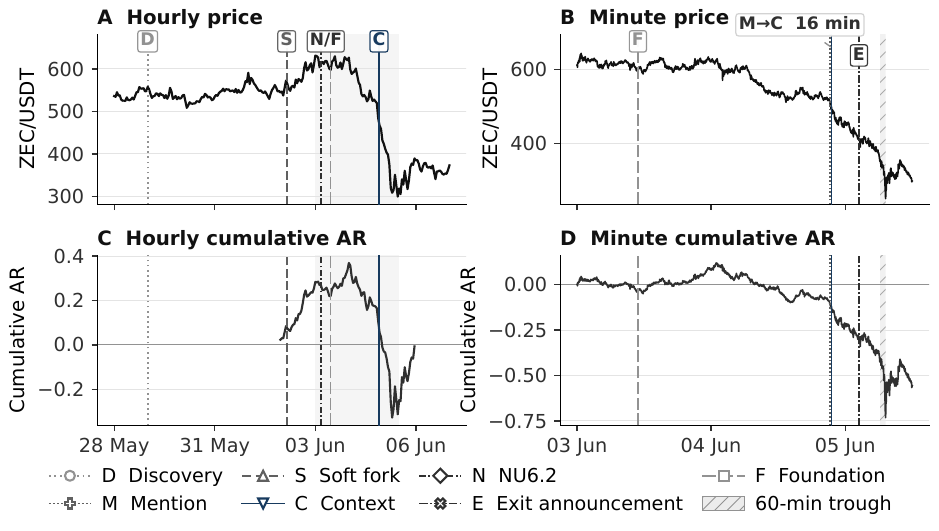}
\caption{Orchard market evidence. Panels A/C show hourly price and BTC-adjusted
CAR; B/D show minute data through June 5 12:00 UTC. Shading marks the zoomed interval
and largest 60-minute decline. N/F shares a top-rail tag because its two true
lines are only 6 h 50 min apart; M$\rightarrow$C marks the true 16-minute gap
between the 21:17 X article and the forum post.
Negative movement predates the final anchor, so markers establish timing rather
than causal breaks.}
\label{fig:market}
\end{figure}

\subsection{On-Chain Exit Evidence}

P4 is checked with \emph{net unshielding}: ZEC leaving the aggregate Sprout,
Sapling, and Orchard shielded pools minus entering them. This measures net
aggregate shrinkage rather than gross exits. Per-block changes
reconstruct June 1--7 daily/hourly flow \cite{blockchairapi}; levels agree within
0.1\% with zcashinfo/node snapshots \cite{zcashinfoapi,zcashexplorerapi}.
Pool-specific snapshots show that Orchard accounted for 97.4\% of the May
24--June 7 aggregate shielded decline.

\input{tables/table_anchor_flows}

In Figure~\ref{fig:pooltimeseries}, flow is small or constrained around pause/repair, but reaches 175,568 ZEC in the
24 hours after the context post, about $49\times$ baseline
(Table~\ref{tab:anchorflows}). From the first complete 22:00 UTC bin, 48 hours
total 243.2k ZEC, or
4.8\% of the aggregate shielded balance. A single block (June 5, 19:34 UTC; 20 transactions)
nets 60,324 ZEC of outflow, 34\% of the 24-hour total; without it, the 24-hour
outflow is still about $32\times$ baseline. Thus the largest net outflow aligns with the widest publicly
recognized unresolved history.

\FloatBarrier
\begin{figure}[ht]
\centering
\includegraphics[width=\textwidth]{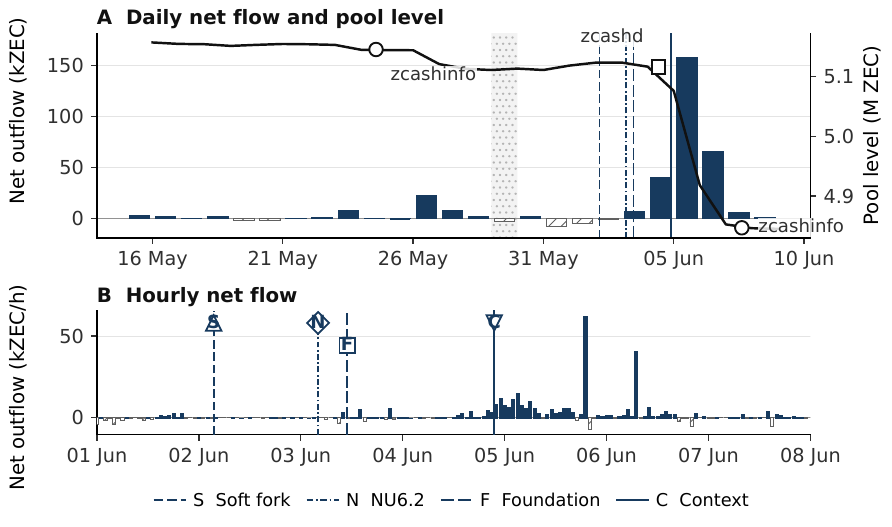}
\caption{Aggregate shielded-pool balance and net unshielding around the Orchard
disclosures. Panel A compares daily flow, reconstructed aggregate balance, and
independent snapshots; Panel B shows June 1--7
hourly flow. The dotted band marks discovery; navy and hatched bars denote net
outflow and inflow. Snapshot agreement supports the reconstruction, but net
movement is not total holder exit.}
\label{fig:pooltimeseries}
\end{figure}
\FloatBarrier

Among the paper's noncausal measures, this independently reconstructed flow is
the mechanism-nearest within-event directional test of P4, not an estimate of
structural $m$. Pause spending was disabled, the NU6.2/Foundation windows
overlap by 18 hours, and net flow can include delayed transfers or offsetting
inflows. Prices, sales, and flows are jointly determined, so the series does not
identify a causal ambiguity effect.

\subsection{Cross-Market Diagnostics}

Cross-market data ask whether the move is venue-specific or market-wide.
Under the 21:00 UTC hour-floor convention, Binance contract open interest rises
$21.8\%$ while its dollar value falls with
price and spot market-order flow remains net selling
\cite{binancefuturesapi,binancespotklines}. Relative to their pre-windows,
XMR/ETH weaken more, DASH improves,
and SOL is nearly flat \cite{coingeckoapi}. Thus the decline extends beyond one
venue but is not uniform across peers. Jointly determined prices, positions,
and flows do not identify its cause; Orchard's value is the staged comparison
of public evidence, price, and reconstructed flow.

\section{Cross-Event Evidence and the Identification Problem}\label{sec:panel}

\subsection{Applying \vdisc Beyond Orchard}

We apply Section~\ref{sec:data}'s rules to 14 disclosures with public records:
four official Bitcoin or Zcash cases
\cite{bitcoincore2018notice,electriccoin2019zcash,zcashsprout2026,wilcox2026orchard}
and ten bridge, contract, governance, oracle, or privacy proof failures. This
purposive sample tests measurement, not a population effect.

Daily CARs use an ETH/USDT market model estimated on days $-120$ to $-21$.
Calendar-day timing can include pre-disclosure movement: Orchard's 21:33 UTC
post would otherwise be credited with the whole June 4 return.
Table~\ref{tab:dose_response} therefore re-estimates the CAR--$z$ slope for the
nine priced events on the first three complete post-disclosure days. Re-timing
Orchard alone moves the slope from $-0.172$ to $-0.097$; applying the same rule
to all nine gives $+0.027$. The calendar-day interval excludes zero because
that clock credits Orchard's June 4 decline, which began before the post, to
the disclosure; both re-timed pairs-bootstrap 95\% intervals span zero, so the
pilot does not identify even the sign of the slope. Leverage, proxies, and
record completeness also differ.

\subsection{Comparison with Earlier Zcash Events}

Within Zcash, the aggregate shielded pool grows in the complete-proof case
(Sprout itself accepts no deposits \cite{zcashsprout2026}), while two unresolved cases
show outflow relative to baseline (Table~\ref{tab:dose_response}), consistent
with P4. Daily gross, daily net, and anchored net units differ, so response
magnitudes cannot be ranked.

\subsection{Why Returns Cannot Separate the Two Channels}

Among 38 events with a public loss figure (four official cases report zero),
market-cap-normalized reported loss and unresolved exposure co-move (Pearson
$0.86$, or $0.75$ without Multichain; Spearman $0.56$), so a negative return
cannot distinguish them. With all 38 events on the first three complete
post-disclosure days, residuals from a regression of the discount on loss/mcap
do not differ between the nine custody- or privacy-unverifiable events and the
rest (Welch $p=0.46$), and Orchard ranks 18/38. The panel documents confounding;
Orchard anchors and pool flows add timing/behavior evidence, not a separate
causal estimate.

The $0.86$ correlation is not an estimated verifiability effect, and part of it
is built into the records: in 17 of the 31 worksheet-coded events, the exposure
numerator equals the reported loss. A return regression then loads on both
channels unless external evidence measures proof and exposure separately; a
separate exposure field helps only when exposure is sourced independently of
the loss.

\input{tables/table_pool_dose_response}

The comparison shows why exposed state and public proof must be observable
before loss, exposure, and unresolved history collapse into one return.

\section{Protocol Design Implications}\label{sec:design}

Design asks whether outsiders can verify history ($q$), how much state one fault
can expose ($\theta$), and whether withdrawals reveal exit pressure ($m$).
Table~\ref{tab:design} maps each question to public evidence.

\input{tables/table4_design}

Replay, balance, and containment evidence strengthen $q$ or bound $\theta$;
boundary, migration, and disclosure records clarify operations and $m$. These
are complementary design candidates, not tested or causally ranked controls.

The distinction is operational. Replay and balance proofs tell outsiders what
can be ruled out about the past; containment and migration limit which state a
fault can affect. A residual-uncertainty notice does not close the unknown set,
but it prevents proved, unproved, and unknown claims from being conflated.

\section{Limitations}\label{sec:ethics}

The evidence ladder answers distinct questions: Orchard anchors order public
knowledge and returns; reconstructed flow checks holder movement; earlier Zcash
events test within-protocol direction; and the 40-event panel asks whether
returns separate loss from unresolved history. They do not in this sample.

The pattern fits the turnstile mechanism but is not a causal discount estimate:
the final statement bundled severity with unverifiability, price fell earlier,
event clocks and proxies differ, the 14 cases are purposive, and net flow is not
gross exit. Primary sources, fixed coding, and a frozen model-assisted-coding
audit make decisions reproducible without removing these limits.

This defensive study uses only public, patched-event evidence; it develops no
exploit code, targets no live system, and excludes unsafe raw outputs.

\section{Conclusion}

Privacy changes what repair can establish. A patch closes the disclosed flaw;
it does not prove whether private state was abused beforehand. We call the
market consequence of that gap the verifiability discount. It can matter even
with no reported theft and protected aggregate supply because public evidence
may still leave past abuse unresolved.

The turnstile model explains the incentive: an early exit can avoid part of a
possible shortfall and leave more to remaining holders. Orchard's staged
anchors match the predicted direction---a positive return after notice, near
zero after repair and supply assurance, and the largest negative post-anchor
return ($-0.162$) after past abuse could not be ruled out. Net unshielding over
the next 48 complete hours was $4.8\%$ of the aggregate shielded balance. These patterns, and earlier
Zcash comparisons, are directional evidence rather than causal estimates.

In 40 security events, market-cap-normalized reported loss and unresolved
exposure were correlated among the 38 with a public loss figure (Pearson
$0.86$, Spearman $0.56$), partly because exposure is often coded from the loss
itself. A return therefore cannot show whether markets priced reported
loss, unresolved history, or both.
Methodologically, \vdisc records source and time, loss, exposure, public proof,
return, and pool movement as separate fields. It then compares their timing
across Orchard, earlier Zcash cases, and the broader panel, making explicit what
outsiders can and cannot establish without treating an event return as causal.

Repair should provide evidence as well as corrected code. Protocols should
limit the state one flaw can expose, prepare evidence that preserves privacy
while bounding past abuse, and state what remains unknown. Resilience requires
both stopping future abuse and making pre-repair history as auditable as privacy
permits.

\paragraph{Declarations.}
No human-subject data are used; all inputs are public, timestamped sources. The
authors have no competing interests to declare. The views are those of the
authors, not their affiliated institutions, and this academic paper is not
investment advice. An AI agent assisted with language editing, table
generation, and consistency checks; the authors reviewed all output and take
full responsibility for the manuscript.

\FloatBarrier
\bibliographystyle{splncs04}
\bibliography{references}

\end{document}

%% file: tables/table_zec_case.tex
\begin{table}[ht]
\caption{ZEC cumulative abnormal returns before and after the four Orchard anchors.}
\label{tab:zecdiag}
\centering
\footnotesize
\setlength{\tabcolsep}{3.5pt}
\begin{tabularx}{\textwidth}{@{}
>{\raggedright\arraybackslash}p{0.14\textwidth}
>{\raggedright\arraybackslash}p{0.115\textwidth}
>{\raggedright\arraybackslash}X
>{\raggedleft\arraybackslash}p{0.08\textwidth}
>{\raggedleft\arraybackslash}p{0.08\textwidth}
>{\raggedright\arraybackslash}p{0.18\textwidth}@{}}
\toprule
Anchor & UTC & What became public & 24 h pre-CAR & 24 h post-CAR & Context-post tail test \\
\midrule
Public notice & 6/2 06:20 & Initial containment notice & 0.052 & 0.185 & \\
NU6.2 & 6/3 04:05 & Corrected circuit; no evidence of exploitation; supply unaffected & 0.185 & 0.014 & \\
Foundation post & 6/3 10:55 & Confirms no known exploitation and intact supply & 0.148 & 0.016 & \\
Context post & 6/4 21:33 & Prior exploitation judged unlikely but not ruled out & -0.134 & -0.162 &
$p_{ob}=0.003$; $q_{\mathrm{BH}}=0.051$ \\
\bottomrule
\end{tabularx}

\noindent\footnotesize\emph{Note.} CAR is the cumulative BTC-adjusted return.
Each pre- and post-window contains 24 complete hourly returns. For the context
post, $p_{ob}$ is the one-sided lower-tail placebo probability, the smallest
value attainable with 312 overlapping pre-discovery blocks, and
$q_{\mathrm{BH}}$ is the Benjamini--Hochberg adjusted value within the 16-test
family (4 anchors $\times$ 4 statistics).
\end{table}

%% file: tables/table_anchor_flows.tex
\begin{table}[ht]
\caption{Net movement out of the aggregate Sprout, Sapling, and Orchard shielded pools during the first 24 complete hours after the soft fork and each disclosure anchor.}
\label{tab:anchorflows}
\centering
\footnotesize
\begin{tabularx}{\textwidth}{@{}
>{\raggedright\arraybackslash}p{0.35\textwidth}
>{\raggedleft\arraybackslash}p{0.14\textwidth}
>{\raggedright\arraybackslash}X@{}}
\toprule
Anchor (pool state) & \multicolumn{1}{c}{Net unshielding (ZEC)} & How to read the comparison \\
\midrule
Soft-fork pause (Orchard off) & -7 & Constrained; not comparable \\
NU6.2 (restored) & 7,573 & Pent-up flow possible \\
Foundation post (operating) & 9,052 & Overlaps NU6.2 \\
Context post (operating) & 175,568 & About $49\times$ baseline \\
\bottomrule
\end{tabularx}

\noindent\footnotesize\emph{Note.} A positive value means that more ZEC left
the aggregate shielded pools than entered them; it is not the total amount withdrawn. The
pause window starts at 04:00, the first complete hour after the 03:30 activation
(a 02:00 start gives $-125$). The
NU6.2 and Foundation 24-hour windows overlap by 18 h. The May 15--28 baseline
is 3,560 ZEC per day.
\end{table}

%% file: tables/table_pool_dose_response.tex
\begin{table}[ht]
\caption{Cross-event directional comparisons and identification diagnostics.}
\label{tab:dose_response}
\centering
\footnotesize

\begin{tabularx}{\textwidth}{@{}
>{\raggedright\arraybackslash}p{0.17\textwidth} c
>{\raggedright\arraybackslash}p{0.16\textwidth}
>{\raggedright\arraybackslash}X
>{\raggedright\arraybackslash}p{0.17\textwidth}@{}}
\multicolumn{5}{@{}l}{\textbf{Panel A. Same-chain directional comparison}} \\
\toprule
Event & $z$ & Measure & Baseline $\to$ event & Within-event response \\
\midrule
Sprout 2026 & 0.000 & Daily aggregate net & 435 out/day $\to$ 14,026 inflow & Aggregate pool grows \\
Sprout 2019 & 0.020 & Daily gross & 8.8k/day $\to$ 25.4k out & $\approx2.9\times$ \\
Orchard 2026 & 0.162 & 24h aggregate net & 3,560 out/day $\to$ 175,568 out & $\approx49\times$ \\
\bottomrule
\end{tabularx}

\begin{tabularx}{\textwidth}{@{}
>{\raggedright\arraybackslash}p{0.25\textwidth}
>{\raggedright\arraybackslash}X
>{\raggedright\arraybackslash}p{0.18\textwidth}@{}}
\multicolumn{3}{@{}l}{\textbf{Panel B. Cross-event scope and frontier}} \\
\toprule
Exercise (coverage) & Result & Reading \\
\midrule
Clock sensitivity (14 events; 9 priced) & Orchard CAR $-0.334\to+0.015$ on the
first 3 complete post-days; CAR--$z$ slope $-0.172$ (calendar days),
$-0.097$ (Orchard re-timed), $+0.027$ (all nine re-timed) & Sign not
identified \\
Loss frontier (40 events; 38 with a loss figure) & $\mathrm{corr}(\mathrm{loss/mcap},z)=0.86$
(Spearman $0.56$); residual-group Welch $p=0.46$ & Loss--$z$ co-move, partly
by construction \\
Orchard residual (rank among 38) & $-0.57\sigma$; rank 18/38 & No outlier \\
\bottomrule
\end{tabularx}

\noindent\footnotesize\emph{Note.} Pairs-bootstrap 95\% intervals for the three
slopes are $[-2.237,-0.004]$, $[-2.103,0.282]$, and $[-0.402,0.306]$. Frontier
residuals use the first 3 complete post-days for all 38 events; re-timing
Orchard alone gives Welch $p=0.31$ and $-0.61\sigma$ (rank 19/38).
\end{table}

%% file: tables/table4_design.tex
\begin{table}[!ht]
\caption{Verifiability-by-design evidence map.}
\label{tab:design}
\centering
\small
\begin{tabularx}{\textwidth}{@{}
>{\raggedright\arraybackslash}p{0.18\textwidth}
>{\raggedright\arraybackslash}p{0.31\textwidth}
>{\raggedright\arraybackslash}X@{}}
\toprule
Mechanism & Uncertainty addressed & Public evidence produced \\
\midrule
\multicolumn{3}{@{}l}{\textit{Proof and accounting mechanisms}} \\
State replay & Historical non-abuse & Replayable logs, checkpoints, or snapshots \\
Balance proof & Aggregate supply & Aggregate-balance proof \\
Containment & Maximum affected set & Quarantine evidence or bounded loss sets \\
\midrule
\multicolumn{3}{@{}l}{\textit{Operational and disclosure controls}} \\
Turnstile gate & Cross-pool boundary & Boundary, conversion log, or invariant \\
Migration path & Future state; legacy-state treatment &
Migration/retirement proof; new keys \\
Residual notice & Unknowns explicit---not reduced &
Proved/unproved/unknown split \\
\bottomrule
\end{tabularx}
\end{table}